\newtheorem{theorem}{Theorem}[section]
\newtheorem{proposition}[theorem]{Proposition}
\newtheorem{lemma}[theorem]{Lemma}
\newtheorem{corollary}[theorem]{Corollary}
\theoremstyle{definition}
\renewcommand{\leq}{\leqslant}
\renewcommand{\geq}{\geqslant}
\newcommand{\la}{\langle}
\newcommand{\ra}{\rangle}
\newcommand{\da}{\downarrow}
\newcommand{\ua}{\uparrow}
\newcommand{\uap}{\uparrow^+}
\newcommand{\yu}{y\uparrow}
\newcommand{\allnat}{[\omega]}
\newcommand{\finnat}{\omega^{<\omega}}
\newcommand{\finbin}{2^{<\omega}}
\newcommand{\rar}{\rightarrow}
\newcommand{\computes}{\longrightarrow}
\newcommand{\compeq}{\longleftrightarrow}
\newcommand{\ovl}[1]{\overline{#1}}
\newcommand{\zetZero}{z^{4i}}
\newcommand{\zetOne}{z^{4i + 1}}
\newcommand{\zetTwo}{z^{4i + 2}}
\newcommand{\zetThree}{z^{4i + 3}}
\newenvironment{acknowledgement}{\noindent\bf Acknowledgment\rm}{}
\newenvironment{keywords}{\noindent\sc Keywords:\rm}{\mbox{}}
\begin{document}

\title{Additivity of on-line decision complexity is violated by a linear term in the length of a binary string DRAFT}

\author{Bruno Bauwens
\thanks{
The result and its motivation was presented at 2009 conference of Logic, Computability
and Randomness in Luminy \cite{LuminyTalk}.
Department of Electrical Energy, 
Systems and Automation, Ghent University, 
Technologiepark 913, B-9052, Ghent, Belgium, 
Bruno.Bauwens@ugent.be.
Supported by a Ph.D grant of the Institute for the Promotion of 
Innovation through Science and Technology in Flanders (IWT-Vlaanderen).} 
}

\date{\normalsize \today}

\maketitle  

\begin{abstract}
 We show that there are infinitely many binary strings $z$, such that the sum 
of the on-line decision complexity of predicting the even bits of $z$ given the 
previous uneven bits, and the decision complexity of predicting the uneven bits 
given the previous event bits, exceeds the Kolmogorov complexity of $z$ by a 
linear term in the length of $z$.
 \\[5pt]
 \begin{keywords}
  Decision complexity -- Kolmogorov complexity -- Decompositions of Kolmogorov complexity
 \end{keywords}
\end{abstract}

\section{Introduction} 

On-line decision complexity has been introduced and investigated in \cite{ShenRelations,
onlineComplexity}.  It also naturally appears 
in the definition of ideal influence tests \cite{AIT, LuminyTalk}.
A natural question is whether algorithmic mutual information of two time series $x,y$, can be 
decomposed into an information flow going from $x$ to $y$, a flow going from
$y$ to $x$, and an information flow instantaneously present in both strings. 
It turns out \cite{AIT} that this question is related to the question of 
defining a decomposition of $K(x,y)$ with $l(x)=l(y)$ as the sum of the complexity of predicting 
$x_{i+1}$ given $x_1...x_i$ and $y_1...y_i$, $i\leq n$, and the complexity of predicting 
$y_{i+1}$ given $x_1...x_{i+1}$ and $y_1...y_i$.  It will be shown that using on-line decision 
complexity for this complexity, this sum exceeds $K(x,y)$ by 
a linear constant in $l(x)$.
A modification of this definition of on-line decision complexity will be shown 
to have an approximate decomposition \cite{AIT, LuminyTalk}.  

Non-additivity of decision complexity was also shown in \cite{Muchnik}, 
in the context of randomness defined by supermartingales. Using natural definitions
for randomness a paradox is shown: if 
the even bits of $z$ given the past uneven bits of $z$ are random, 
and also the uneven bits of $z$ given the past even bits of $z$ are random, 
than it is possible that $z$ is not random. The proof of this result implies that 
additivity of on-line decision complexity is violated by a logarithmic term.

\section{Definitions and notation} \label{sec:defs}

For excellent introductions to Kolmogorov complexity we refer to 
\cite{GacsNotes, LiVitanyi}.
Let $\omega$, $\finnat$, $2^N$ and $2^{<\omega}$ denote the set of the Natural numbers, 
the set of finite sequences of Natural numbers, the binary strings of length $N$, 
and the binary strings of finite length. Other definitions are analogue. Let $\epsilon$
denote the empty sequence. Remark that 
there is a natural bijection between $\omega$ and $2^{<\omega}$, defined by: 
$$
\epsilon \rar 0, 0 \rar 1, 1 \rar 2, 00 \rar 3, 01 \rar 4, ... 
$$
$\allnat$ is the set of nested sequences of Natural numbers, with finite depth.
Mathematically, it is the closure of $\omega$ under the mapping $f(S) = S^{<\omega}$. 
Remark that there is a computable bijection between $\omega$ and $\allnat$, therefore
most complexity and computability results in $\omega$ also hold in $\allnat$.

An interpreter $\Phi$ is a partial computable function from $\finbin \times \allnat \rar \allnat$.
An interpreter is prefix-free if for any $x$, the set $D_x$ of all $p$ where 
$\Phi(p|x)$ is defined, is prefix-free.
Let $\Phi$ be some fixed optimal universal prefix-free interpreter. 

For any $x \in \finbin$, $l(x)$ denotes the length of $x$. For any $x \in \finnat$, 
$l(\overline{x})$ corresponds to the length of some prefix-free encoding of $x$ on a 
binary tape:
$$
 l(\overline{x}) = \sum_{i=1}^{l(x)} 2\log x_i.
$$

\noindent
For $x,y \in \allnat$, the \textit{Kolmogorov complexity} $K(x|y)$, is defined as:
$$
 K(x|y) = \min\{l(p): \Phi(p|y) \da=x\}.
$$
The Kolmogorov complexity of elements in $\finbin$ is defined by using
the computable bijection mentioned in the beginning of this section.

\noindent 
For $Z \in \allnat, Q,A \in \omega^n$, $Q^i$ denotes $Q_1...Q_i$. The on-line decision complexity is defined by:
$$
 K(Q_1 \rar A_1; ... ; Q_n \rar A_n|Z) = \min\{l(p): \forall i<n [\Phi(p|Q^i,Z) \da=A_i]\}.
$$
This definition differs slightly with the definition of \cite{onlineComplexity}, 
with respect that $A \in \omega^n$ is chosen, in stead of $A \in 2^n$. 
Also a shorter notation \cite{AIT} will be used:
\begin{eqnarray*} \label{eq:Kup}
  K(x|y\ua) &=& K(0 \rar x_1; ... ; y_{n-1} \rar x_n), \\
  K(y|x\uap) &=& K(x_1 \rar y_1; ... ; x_n \rar y_n). 
\end{eqnarray*}

\section{Main result and proof tactic}

\begin{proposition} \label{prop:decompMain}
$$
\exists c>0\exists^\infty x,y \in \finnat \big[ K(x|\yu) + K(y|x\uap) - K(x,y) > c(l(\overline{x}) + l(\overline{y})) \big].
$$
\end{proposition}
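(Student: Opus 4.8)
The plan is to produce, for every large $n$, a pair $x,y\in\finnat$ of common length $n$ obtained by an explicit (hence $O(1)$-complexity) procedure from a genuinely random string, for which the two on-line decision complexities are both at least a fixed fraction $\alpha$ of $l(\ovx)+l(\ovy)$ while $K(x,y)$ is at most a fraction $\beta<2\alpha$ of the same quantity; letting $n\to\infty$ then gives infinitely many witnesses and any positive $c<2\alpha-\beta$ works. The upper bound $K(x,y)\le\beta\,(l(\ovx)+l(\ovy))+O(\log n)$ is the easy half: $(x,y)$ is computed from the random seed plus $O(\log n)$ bits of bookkeeping, and one only has to arrange that the number of raw random bits used is the intended fraction --- say one half --- of $l(\ovx)+l(\ovy)$.

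The real content, and the step I expect to be the main obstacle, is the two lower bounds $K(x|\yu)\ge\alpha\,(l(\ovx)+l(\ovy))-O(\log n)$ and $K(y|x\uap)\ge\alpha\,(l(\ovx)+l(\ovy))-O(\log n)$. The obvious bounds are far too weak: feeding a witnessing program $p$ for $K(x|\yu)$ the successive prefixes of $y$ recovers $x$ from $(p,y^{n-1})$, so $K(x|\yu)\ge K(x|y^{n-1})-O(1)$, but this ``batch'' quantity --- like the per-block conditional complexities $K(x_i|y^{i-1})$ --- is small exactly when $x$ and $y$ are globally redundant, which they must be here; this is why the argument implicit in \cite{Muchnik} only produces a logarithmic term. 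To recover a linear term one must use the rigidity of having a single program serve every prefix, by showing that $p$ together with a \emph{short} advice string already reconstructs a string of near-maximal Kolmogorov complexity. Concretely, one ``unrolls'' $p$: the empty history yields $x_1$, then one regenerates $y_1$, feeds $p$ the history $y_1$ to get $x_2$, regenerates $y_2$, and so on, so that $p$ plus the advice needed to regenerate the $y_i$'s recovers all of $x$ (indeed all of $(x,y)$); symmetrically for $K(y|x\uap)$. Adding the two reconstruction inequalities gives $K(x|\yu)+K(y|x\uap)\ge 2K(x,y)-(\text{total advice})-O(\log n)$, hence $K(x|\yu)+K(y|x\uap)-K(x,y)\ge K(x,y)-(\text{total advice})-O(\log n)$, and the proposition follows once the total advice is a constant fraction of $l(\ovx)+l(\ovy)$ below $K(x,y)$.

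What makes this delicate --- and where the construction has to be designed with care --- is a twofold tension. First, making the blocks $y_i$ cheaply regenerable from the past, so that unrolling the $x$-predictor costs little advice, tends to make $K(y|x\uap)$ trivially small, and symmetrically; so neither stream may be self-generating by a simple rule. Second, a naive shift- or XOR-with-a-key encoding, or a linear-algebraic one taking $x$ and $y$ to be the images of a random vector under fixed invertible binary matrices, forces the two advice strings together to be essentially the whole random seed, so the $-(\text{total advice})$ term cancels the gain. The remedy should be to interleave two \emph{coupled} layers of randomness, one driving each stream, arranged so that each predictor is individually forced to carry a linear amount of irreducible information that can nonetheless be regenerated from only a small advice string, while the two layers overlap enough to keep $K(x,y)$ bounded away from $l(\ovx)+l(\ovy)$; I expect the nested structure of $\allnat$ to be convenient for keeping all the indexing and self-delimiting inside the $O(\log n)$ budget.
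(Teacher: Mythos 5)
There is a genuine gap at the heart of your plan, namely in the step ``the proposition follows once the total advice is a constant fraction of $l(\ovx)+l(\ovy)$ below $K(x,y)$.'' This requirement is unachievable in the form you need it. The advice string that regenerates $y_i$ during the unrolling of the $x$-predictor has access to exactly the history $x^i,y^{i-1}$ that the on-line $y$-predictor also has (the predictor can recompute its own earlier answers from the earlier questions); hence the concatenation of these self-delimiting advice pieces, with an $O(1)$ dispatcher, \emph{is} an on-line predictor for $y$ given $x\uap$, so $K(y|x\uap)\leq^+ \mathrm{advice}_y$ and symmetrically $K(x|\yu)\leq^+\mathrm{advice}_x$. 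Combining this with your own reconstruction inequality $K(x|\yu)+K(y|x\uap)\geq 2K(x,y)-(\text{total advice})-O(\log n)$ forces $\text{total advice}\geq K(x,y)-O(\log n)$, so the quantity $K(x,y)-(\text{total advice})$ you hope to make linear is in fact $\leq O(\log n)$. (Equivalently: the two advice streams, interleaved, reconstruct $(x,y)$ outright, so they can never be compressed below $K(x,y)$ by more than the per-step glue.) You correctly sense this tension in your last paragraph, but the ``remedy'' you describe --- information that each predictor is forced to carry linearly yet is regenerable from short advice given the same history --- is precisely the self-contradictory requirement; no coupling of random layers can realize it. The only escape would be advice that essentially uses the other predictor's program $p$ as an information source, for which you give no construction.

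The paper's route is different and avoids this trap entirely. The linear gain does not come from comparing on-line complexity to reconstruction cost; it comes from the complexity-of-complexity defect: for each block one picks $x_i\in 2^m$ with $K(K(x_i|\cdot)\,|\,x_i,\cdot)\geq \log m-\log\log m$ and sets $y_i$ to (a shortest program for) $K(T_{x,i})$, so that $K(x_i)+K(y_i|x_i)-K(x_i,y_i)\geq^+\log m-\log\log m$ per block. Crucially $m$ is a fixed \emph{constant}, chosen large, so this logarithmic-in-$m$ defect is already a constant fraction of the block length $m$; summing over $n$ blocks gives a term linear in the total length $O(nm)$. The second ingredient, which your proposal has no substitute for, is Lemma \ref{lem:decisionComplexityLB}: a counting/semimeasure argument on the shrinking sets $S_i$ of surviving programs showing that on-line decision complexity dominates $\sum_i K(T_i|L_{i-1})-O(n)$, with the level data $L_{i-1}$ smuggled into the strings themselves via the auxiliary components $u,v$. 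You would need to either adopt this per-block defect mechanism or supply a genuinely new lower-bound technique; as written, the proposal's final inequality cannot yield a positive linear term.
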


\noindent
In $\cite{ComplexityOfComplexity}$ and repeated in $\cite{GacsNotes, LiVitanyi}$, 
it is proven that for any $n$ there is an $x \in 2^n$ such that:
$$
 K(K(x)|x) \geq^+ \log n - \log \log n.
$$
Let $y$ be the binary expansion of $K(x)$. From this and 
equation \eqref{eq:complexityOfComplexity} it can be shown that 
$$
 K(x) + K(y|x) - K(x,y) \geq^+ \log n - \log \log n .
$$
By inserting zeros at the right places in $x,y$, it can be shown that there exists 
infinity many $x,y$ with $l(x) = l(y)$:
$$
K(x|\yu) + K(y|x\uap) - K(x,y) > O(\log l(x)).
$$
This shows proposition \ref{prop:decompMain} for a logarithmic term in $l(x)$. 
It seems natural to think that such a result can be improved to a linear term, 
by concatenating such strings. This is what eventually will happen in the proof,
 at equation \eqref{eq:KTyLarge}.
 However, to be able to add up these differences, 
 conditional complexities must add up in some way to 
on-line decision complexity, in what extend this is possible is still an open 
problem.  Happily, 
Lemma \ref{lem:decisionComplexityLB} can circumvent this, 
if some extra information is available. 
This information is stored in sequences 
$u$ and $v$ and is added to 
$x$ and $y$. Adding this information 
requires, some more bounds to make the proof work: \eqref{eq:puttingOutY},
\eqref{eq:puttingInY}. The proof below provides all technical details.

\section{Proof}

\noindent
First some definitions and lemmas are given. 
$f(x) \leq^+ g(x)$ is short for $f(x) \leq g(x) + O(1)$, 
and $f(x) =^+ g(x)$ is short for $f(x) = g(x) \pm O(1)$.
For any $a,b \in \allnat$, $a \computes b$ means that 
there is a fixed $p \in \finbin$
with $l(p) \leq O(1)$, such that $\Phi(p|a) \da= b$.
Remark that if $a \computes b$, then $K(a) \geq^+ K(b)$.
The shortest program witnessing $K(a|b)$ is denoted by:
$$
a^*[b] = \min \{p : \Phi(p|b) \da= a\}.
$$
$a^*$ is short for $a^*[\epsilon]$. Remark that: 
\begin{equation} \label{eq:shortestOfShortest}
 a^*[b],b \compeq (a^*[b])^*[b],b.
\end{equation}

\noindent
Lemmas \ref{lem:sequentialAdditivity},
\ref{lem:complexityOfComplexity}, and \ref{lem:KshortestUB} provide 
observations, known within the community, 
and stated here explicitly for later reference.

\begin{lemma} \label{lem:sequentialAdditivity}
 For $A \in \omega^{<\omega}$, 
 $$
  \sum_{i\leq n} K(A_i|A^{i-1}) \geq K(A) - O(n).
 $$
\end{lemma}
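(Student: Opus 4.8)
The plan is to rebuild $A$ from the concatenation of shortest conditional programs for its successive entries, spending only a bounded number of extra bits per entry. For $1\leq i\leq n$ set $p_i=A_i^*[A^{i-1}]$, so that $l(p_i)=K(A_i|A^{i-1})$ and $\Phi(p_i|A^{i-1})\da=A_i$; here $A^0=\epsilon$, so $p_1$ witnesses $K(A_1)$. The key point is that, knowing $A^{i-1}$, one can peel $p_i$ off the front of a bit stream precisely because $D_{A^{i-1}}$ is prefix-free.

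Next I would describe an interpreter $\Psi$ that ignores its condition and, on a binary input, keeps a current prefix $B$ (initially $\epsilon$) and repeatedly does the following: it reads bits of the remaining input one at a time while dovetailing the computations $\Phi(\cdot|B)$, stopping as soon as the string $q$ read so far satisfies $\Phi(q|B)\da$; it appends $\Phi(q|B)$ as the next entry of $B$; then it reads one further bit, halting with output $B$ if that bit is $0$ and looping again if it is $1$. Fed the string $p_1\,1\,p_2\,1\,\cdots\,p_{n-1}\,1\,p_n\,0$, the machine $\Psi$ recovers $A_1,\dots,A_n$ in turn and outputs $A$; this input has length $\sum_{i\leq n}K(A_i|A^{i-1})+n$. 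Since $D_B$ is prefix-free for every $B$, each $q$ read in the loop is determined unambiguously, the entire left-to-right parsing is deterministic, and no halting input of $\Psi$ is a proper prefix of another; hence $\Psi$ is a prefix-free interpreter. By optimality of $\Phi$ there is a constant $c$ with $K(A)\leq l(r)+c$ whenever $\Psi(r|\epsilon)\da=A$, so $K(A)\leq\sum_{i\leq n}K(A_i|A^{i-1})+n+c$, which rearranges to the claim. (Replacing the $n$ continuation bits by a single prefix-free code for $n$ at the front of the input even yields $K(A)-O(\log n)$, but the stated $O(n)$ is all that is needed.)

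The one point requiring care is the inner loop of $\Psi$: using the prefix-free interpreter $\Phi$ as an ``online'' reader that consumes exactly one program from a stream and detects where it ends. This is the standard way prefix machines are operated -- feed bits in one at a time, dovetail the simulation, and stop the instant a halting program has been consumed -- and prefix-freeness of each $D_B$ is exactly what makes this well defined and makes the resulting parse unique. Everything else is routine bookkeeping, so I expect no genuine obstacle beyond stating this simulation cleanly.
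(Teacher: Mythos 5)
Your proof is correct, but it takes a genuinely different route from the paper. The paper quotes the additivity (symmetry of information) theorem $K(U,V|W) =^+ K(U|W) + K(V|U^*[W])$ together with the monotonicity $K(V|U,W) \geq^+ K(V|U^*[W],W)$, and applies the resulting inequality $K(U,V|W) \leq^+ K(U|W)+K(V|U,W)$ inductively along $A^1,A^2,\dots,A^n$; each of the $n$ applications loses $O(1)$, which is where the $O(n)$ comes from. You instead prove the needed subadditivity from scratch, for all $n$ segments at once, by concatenating the shortest conditional programs $A_i^*[A^{i-1}]$ with one flag bit each and exhibiting an interpreter that parses the concatenation; the uniqueness of the parse, hence the prefix-freeness of the new domain, rests exactly on the prefix-freeness of each $D_{A^{i-1}}$, as you say. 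What your approach buys is self-containedness (no appeal to the Levin--G\'acs theorem, of which the paper only uses the easy direction anyway) and an explicit overhead of $n+O(1)$, improvable to $O(\log n)$ as you note; what the paper's approach buys is brevity given the cited machinery. One presentational remark: in this paper an interpreter is just a partial computable function on $\finbin\times\allnat$ with prefix-free domain sections, not literally a machine reading bits from a one-way tape. It is therefore cleaner to define $\Psi(r)$ by semi-deciding whether the finite string $r$ admits a complete parse $q_1 b_1 \cdots q_n b_n$ of the required form (such a parse is unique when it exists, by the same prefix-freeness argument), rather than via an online reader; this sidesteps the usual delicacy that a dovetailing reader may request bits beyond the end of $p_i$, or beyond the end of $r$ itself, before discovering that $\Phi(p_i|A^{i-1})$ halts. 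This is a matter of phrasing, not a gap.
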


\begin{proof}
For $U,V \in \omega$, prefix-free complexity satisfies additivity \cite{LiVitanyi}: 
\begin{equation} \label{eq:Kadditivity}
 K(U,V|W) =^+ K(U|W) + K(V|U^*[W]). 
\end{equation}
Since there is a computable bijection between $\omega$ and $\allnat$, 
this result also applies to $\allnat$. 
Let $U,V \in \allnat$, since $U^*[W],W \computes U$, 
$$
K(V|U,W) \geq^+ K(V|U^*[W],W).
$$ 
Inductive application of both equations above on $A^i$ proves the lemma.
\end{proof}

\begin{lemma} \label{lem:complexityOfComplexity}
 For $a,b \in \omega$ and $c \in \finnat$:
 $$
  K(a,b|c) =^+ K(a,b, K(b|a^*[c],c)|c).
 $$
\end{lemma}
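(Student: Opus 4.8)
The plan is to prove the two inequalities
$$
K(a,b\mid c) \leq^+ K(a,b,K(b\mid a^*[c],c)\mid c)
\quad\text{and}\quad
K(a,b\mid c) \geq^+ K(a,b,K(b\mid a^*[c],c)\mid c)
$$
separately, where the interesting direction is ``$\geq^+$''. The ``$\leq^+$'' direction is immediate: from the pair $(a,b)$ together with $c$ one can compute $a^*[c]$ by dovetailing (running $\Phi(\cdot\mid c)$ on all inputs until one halts with output $a$, taking a shortest such), hence compute $K(b\mid a^*[c],c)$, hence recover the triple $(a,b,K(b\mid a^*[c],c))$; so the triple carries no more information than the pair relative to $c$. (Here one uses that $K(a,b\mid c)$ and $K(\text{triple}\mid c)$ each reduce to the other exactly when each object computes the other relative to $c$, together with the Remark that $a\computes b$ implies $K(a)\geq^+ K(b)$, and its conditional version.)

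For the ``$\geq^+$'' direction the idea is the standard counting/enumeration argument used to prove lower bounds on $K(K(x)\mid x)$, adapted to the conditional pair setting. First I would abbreviate $p = a^*[c]$, so $l(p) = K(a\mid c)$, and set $k = K(b\mid a^*[c],c) = K(b\mid p,c)$. The key step is to show that, given $c$ and the number $m = K(a,b,k\mid c)$, one can enumerate a set $S_{c,m}$ of triples of size roughly $2^m$ that contains $(a,b,k)$; then inside $S_{c,m}$ one locates the triples whose third coordinate equals $k$, and among those the ones achieving the minimal first-coordinate-complexity, and then runs the description of length $k$ relative to $(p,c)$ — this pins down $b$ from a description of length about $m - l(p) + O(\log\text{overhead})$, so that a counting argument (à la Chaitin's proof of the lower bound on $K(K)$) forces the overhead to be small, whence $m \geq^+ l(p) + k - O(1)$... wait, that is the wrong direction; instead I would run the argument so as to conclude $K(a,b\mid c) \leq^+ m$ directly by reconstructing $(a,b)$ from $c$, $m$, and a self-delimiting index of size $\leq m$ into the enumeration $S_{c,m}$, which gives $K(a,b\mid c) \leq^+ K(a,b,k\mid c)$.

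Concretely, the reconstruction of $(a,b)$ from $c$ and $m$ goes as follows: enumerate all triples $(a',b',k')$ with $K(a',b',k'\mid c)\leq m$; this is an enumerable list once $c,m$ are given; specifying the position of our particular triple $(a,b,k)$ in this list takes at most $m + O(1)$ bits because the list has at most $O(2^m)$ members, and this position determines $(a,b,k)$, hence in particular $(a,b)$. Thus $K(a,b\mid c)\leq^+ m = K(a,b,k\mid c)$. Combined with the easy direction this yields the claimed equality. The main obstacle I expect is the bookkeeping needed to make the position-index self-delimiting and to handle the $O(1)$ versus $O(\log)$ slack correctly — in particular one must be a little careful that $k = K(b\mid a^*[c],c)$ really is a function of the triple alone (it is, since $k$ is literally the third coordinate) and that no hidden dependence on the length of $c$ sneaks a $\log l(c)$ term into the additive constant; this is handled by noting that everything is relativized to $c$ throughout, so $c$ is supplied for free and only $m$ (a single number, given self-delimitingly in $\leq m + O(\log m)$ bits, but in fact absorbed since it equals the complexity being bounded) needs to be encoded. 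Actually the cleanest route avoids even encoding $m$: one shows $K(a,b\mid c,m)\leq^+ m$ and then removes the conditioning on $m$ at the cost of $K(m\mid c)\leq^+ 2\log m$, and separately observes $m\leq^+ K(a,b\mid c)$ so that $\log m \leq^+ \log K(a,b\mid c)$; but since the statement is only an $=^+$ up to an additive constant and $m$ is itself bounded by $K(a,b\mid c)+O(1)$ via the easy direction, one checks that the logarithmic terms are in fact not needed — precisely the point where Lemma~\ref{lem:complexityOfComplexity} is stronger than a naive estimate, and where the argument of \cite{ComplexityOfComplexity} must be invoked in its sharp form.
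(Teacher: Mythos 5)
There is a genuine gap, and it sits exactly at the crux of the lemma. Your ``immediate'' step --- computing $a^*[c]$ from $(a,b,c)$ by dovetailing and thereby recovering $K(b\mid a^*[c],c)$ --- is false: dovetailing finds \emph{some} halting program for $a$, but at no finite stage can you certify that no shorter program will halt later, so $a^*[c]$ (and hence $K(b\mid a^*[c],c)$) is only limit-computable from $(a,c)$, not computable. If it were computable, the same reasoning would give $K(K(x)\mid x)=O(1)$, contradicting the very theorem of \cite{ComplexityOfComplexity} that is invoked at \eqref{eq:KTyLarge}. Note also that the conclusion you draw from this step (``the triple carries no more information than the pair'') is $K(a,b,K(b\mid a^*[c],c)\mid c)\leq^+ K(a,b\mid c)$, i.e.\ the \emph{hard} direction of the lemma, even though you present it under the heading of the other inequality; the genuinely easy direction, $K(a,b\mid c)\leq^+ K(a,b,K(b\mid a^*[c],c)\mid c)$, needs only that the triple computes the pair. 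Your subsequent enumeration argument (list all triples of conditional complexity at most $m$ and point into the list) again yields only this easy direction. So the hard inequality is never validly established.

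What the hard direction actually requires is a counting argument. By \eqref{eq:NoShortestPrograms} there are only $O(1)$ programs of length $K(a,b\mid c)+O(1)$ that compute $(a,b)$ from $c$, and this finite set can be enumerated from $a$, $b$, $K(a,b\mid c)$ and $c$; hence every member $p$ satisfies $K(p\mid c)=^+ K(a,b,K(a,b\mid c)\mid c)=^+ K(a,b\mid c)$ by \eqref{eq:complexityOfComplexity}. One then exhibits one particular member $p$, assembled from $a^*[c]$ and $b^*[a^*[c],c]$ via \eqref{eq:Kadditivity}, which \emph{computes} $b^*[a^*[c],c]$ and therefore its length $K(b\mid a^*[c],c)$; this gives $K(a,b\mid c)=^+K(p\mid c)\geq^+ K(a,b,K(b\mid a^*[c],c)\mid c)$. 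The point is that the extra quantity is never computed from $(a,b,c)$ --- it is read off from a suitably chosen near-shortest program, whose own complexity is pinned down by the counting theorem. Your closing remark that the sharp form of \cite{ComplexityOfComplexity} ``must be invoked'' gestures at the right ingredient, but it is never actually deployed in the direction where it is needed.
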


\begin{proof}
The proof below, shows the unconditioned version of the lemma, 
since the proof of the conditioned version is the same. 
In \cite{GacsNotes} and exercise $3.3.7$ in \cite{LiVitanyi} 
it is stated that for every $w \in \omega$, and $n \geq K(w)$:
\begin{equation} \label{eq:NoShortestPrograms}
  \log |\{p \in 2^n: \Phi(p) \da= w\}| \leq^+ n - K(w,n),
\end{equation}
and
\begin{equation} \label{eq:complexityOfComplexity}
 K(w,K(w)) =^+ K(w).
\end{equation}
Therefore, for $c$ constant, there are an $O(1)$ number of programs that 
compute $a,b$ and have length $K(a,b) + c$.
Let $S$ be the set of these programs.
Remark that the elements of $S$ can be enumerated given $a,b,K(a,b)$ and therefore, 
for any $p \in S$, using \eqref{eq:complexityOfComplexity}, we have: 
\begin{equation} \label{eq:KabVsKp}
 K(a,b) =^+ K(a,b,K(a,b)) =^+ K(p).
\end{equation}
By equation \eqref{eq:Kadditivity}, we have:
$$
 K(a,b) =^+ K(a) + K(b|a^*).
$$
The programs $a^*$ and $b^*[a^*]$, can be combined into  
a program $p$ computing $a,b$. This program $p$ can be constructed 
such that $p \computes a^*, b^*[a^*]$, and it has a length below 
$K(a,b) + c$, for $c$ constant and large enough.
Therefore $p \in S$, and since  
$b^*[a^*] \computes K(b|a^*) = l(b^*[a^*])$:
$$
K(p) \geq^+ K(a,b, K(b|a^*)).
$$
Combining with equation \eqref{eq:KabVsKp}, finishes the proof.
\end{proof}

\begin{lemma} \label{lem:KshortestUB}
 For $b \in \finbin$, $a,c \in \allnat$:
 $$
  K(a,b|c) \geq^+ K(a, b^*[a,c]|c) - 2\log l(b).
 $$
\end{lemma}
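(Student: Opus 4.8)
The plan is to prove Lemma~\ref{lem:KshortestUB} by showing that, from a shortest program for the pair $(a,b)$ given $c$, one can recover not only $a$ and $b$ but also $b^*[a,c]$, provided one is told the length $l(b)$ (or equivalently a $2\log l(b)$-bit self-delimiting encoding of it). Concretely, let $p = (a,b)^*[c]$, so that $\Phi(p|c)\da = (a,b)$ and $l(p) = K(a,b|c)$. First I would observe that $p,c \computes a$ and $p,c\computes b$, hence $p,c$ together with $b$ — which is itself decoded from $p,c$ — gives us $a,b,c$, from which we can in principle search for $b^*[a,c]$. The subtlety is that $b^*[a,c]$ is not computable from $a,b,c$ alone (it requires knowing $K(b|a,c)$), so the argument must instead run an interleaved enumeration.

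Here is the key step. Knowing $l(b)$, one can dovetail the computations $\Phi(q|a,c)$ over all $q\in\finbin$ and simultaneously run $\Phi(p|c)$; the latter halts and outputs $(a,b)$. Once $(a,b)$ is in hand, one watches the dovetailed enumeration for the first (shortest, then leftmost) $q$ with $\Phi(q|a,c)\da = b$; this $q$ is exactly $b^*[a,c]$. Since $l(b)$ was supplied, the output we declare is the pair $(a,b^*[a,c])$ — note $a$ is already recovered, and $b^*[a,c]$ has just been found — so the whole procedure is a single program taking input $p$ and the encoding of $l(b)$, with oracle $c$, that outputs $(a,b^*[a,c])$. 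Its total input length is $l(p) + 2\log l(b) + O(1) = K(a,b|c) + 2\log l(b) + O(1)$, which yields
$$
K(a,b^*[a,c]|c) \leq^+ K(a,b|c) + 2\log l(b),
$$
and rearranging gives the claimed inequality.

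The main obstacle, such as it is, is making sure the search for $b^*[a,c]$ actually terminates and is well-defined: the enumeration of halting computations $\Phi(q|a,c)\da = b$ is only semi-decidable, so one never knows one has found the true minimum-length witness without a halting guarantee. The fix is that we do not need the \emph{global} minimum to be certified — we only need to be the first program (in the standard shortlex order on $q$) found to output $b$, and $b^*[a,c]$ is by definition that program, so as soon as the dovetailing reaches it the procedure can stop. One should also check that $l(b)$ genuinely needs $2\log l(b)$ bits and not merely $\log l(b)$: since $l(b)$ is a natural number given with no prior bound, a prefix-free code for it costs $\log l(b) + 2\log\log l(b) + O(1) \leq 2\log l(b) + O(1)$ bits, which is absorbed into the stated term. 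Finally one uses the remark from the preamble that $a\computes b$ implies $K(a)\geq^+ K(b)$, in its conditional form, to pass freely between ``outputs the pair'' and ``outputs each coordinate'', completing the argument.
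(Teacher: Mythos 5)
There is a genuine gap at the heart of your argument: the dovetailed search does not and cannot identify $b^*[a,c]$. When you interleave the computations $\Phi(q|a,c)$ over all $q$, programs are discovered in order of halting time, not in shortlex order; the first $q$ found with $\Phi(q|a,c)\da = b$ may be much longer than the true minimum, and when some $q$ is found you have no way to certify that no shortlex-smaller $q'$ will later also be seen to output $b$ --- ruling that out requires deciding which smaller programs diverge, i.e.\ the halting problem. So the procedure either stops too early (outputting the wrong program) or never knows when to stop. Moreover, the extra information you pay for, $l(b)$, is worthless here: $l(b)$ is computable from $b$, which is already computable from $p$ and $c$. If your construction worked it would actually give $K(b^*[a,c]\,|\,a,b,c) \leq O(1)$, which is false --- indeed $b^*[a,c],a,c \computes K(b|a,c)$, and the paper's own starting point is that $K(K(x|w)|x,w)$ can be as large as $\log m - \log\log m$.

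The information you actually need to supply is the \emph{value} $K(b|a,c)$, not the length $l(b)$; this is what the paper's proof does. Given $a,b,c$ and the number $K(b|a,c)$, one enumerates the programs of length exactly $K(b|a,c)$ that output $b$ on input $a,c$; by inequality \eqref{eq:NoShortestPrograms} there are only $O(1)$ of them, so $O(1)$ further bits pick out $b^*[a,c]$ among them (here termination is not an issue, since one only waits for a program known to exist to show up). This gives $K(b^*[a,c]\,|\,a,b,c) \leq^+ K(K(b|a,c)\,|\,a,b,c)$, and the latter is bounded by $K(K(b|a,c)) \leq^+ 2\log K(b|a,c) \leq^+ 2\log l(b)$, using $K(b|a,c)\leq^+ 2l(b)$. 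Your surrounding bookkeeping (adding the cost of the auxiliary datum to $l(p)=K(a,b|c)$ and rearranging) is fine once the auxiliary datum is corrected from $l(b)$ to $K(b|a,c)$.
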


\begin{proof}
The unconditioned version of the lemma is proven, since the 
conditioned proof is essentially the same. It suffices to show that:
$$
 K(b^*[a] | a,b) \leq^+ 2\log l(b).
$$
Again the proof of the unconditioned version of this equation is the 
same as the conditioned one:
$$
 K(b^*|b) \leq^+ 2\log l(b).
$$
Given $b$ and $K(b)$ all programs of length $K(b)$ that output $b$ 
can be enumerated. By equation \eqref{eq:NoShortestPrograms}, 
there are maximally a constant such programs, therefore: 
$$
 K(b^*|b) =^+ K(K(b)|b).
$$
Remark that by the prefix-free code $b_10b_20...b_{l(b)}1$ we have:
$$
 K(b) \leq^+ 2 l(b).
$$
Using the natural bijection between $\omega$ and $2^{<\omega}$, 
this shows that for $n \in \omega$, $K(n) \leq^+ 2\log n$.
$$
K(K(b)|b) \leq^+ K(K(b)) \leq^+ 2\log K(b) \leq^+ 2\log l(b).
$$ 
\end{proof}

\noindent
Let $Z \in \allnat$, $A,Q \in \omega^n$ for some $n$, and $N \in \omega$.
For $i<n$, 
let $T_i=(Q_i | A_i)$ and $T = (T_1, ..., T_n)$. 
\begin{eqnarray*}
 K(T) 	&=& K(A|Q\ua,N) \\
 K(T_i|Z) &=& K(A_i|A^{i-1},Q^{i-1},N,Z).
\end{eqnarray*}
For some fixed $N$, and for all $i\leq n$, we define the sets $S_i$ and the numbers $L_i$:
\begin{eqnarray*}
 S_0(T) &=& 2^N \\
 S_i(T) &=& S_{i-1} \cap \{p: \Phi(p|Q^{i},N) \da=A_i \}\\
 L_i(T) &=& 
  \begin{cases} -1 			& \text{if } |S_i(T)| = 0 \\
    \lceil \log | S_i(T) | \rceil 	& \text{otherwise.} 
  \end{cases}
\end{eqnarray*}

\noindent A lower bound for $K(T)$ is now proven.
\begin{lemma} \label{lem:decisionComplexityLB}
 $$
   K(T) \geq \min\{N, \sum_i K(T_i|L_{i-1}) - O(n)\}.
 $$
\end{lemma}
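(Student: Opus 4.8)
The statement says $K(T)$ is at least the minimum of $N$ and $\sum_i K(T_i \mid L_{i-1})$ (up to $O(n)$). The natural strategy is a dichotomy on whether the process of decoding exhausts the pool $2^N$ of candidate programs. Recall $T=(T_1,\dots,T_n)$ with $T_i=(Q_i\mid A_i)$, and that $K(T)=K(A\mid Q\ua,N)$ is the length of the shortest $p$ that, fed $Q^i$ and $N$, outputs $A_i$ for every $i<n$; the sets $S_i(T)$ track which length-$N$ strings survive all $i$ first constraints, and $L_i$ is (essentially) $\log|S_i|$. First I would dispose of the trivial case: if $S_n(T)=\emptyset$ at some stage, or more precisely if the process "runs out of room" so that the bound $N$ is the binding term, then $K(T)\geq N$ is not what we want — rather, we want that if $|S_n|\ge 1$ the shortest witnessing program has length $\le N$, and the real content is the sum bound. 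So the dichotomy is: either $N \le \sum_i K(T_i\mid L_{i-1}) - O(n)$, in which case $K(T)\geq N$ would need to be argued, or else we must show $K(T) \gtrsim \sum_i K(T_i \mid L_{i-1})$.

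The heart of the argument is the sum bound, and it goes by a counting/enumeration argument analogous to Lemma~\ref{lem:sequentialAdditivity} but tracking the sizes $|S_i|$. The key observation is that $K(T) \geq^+ L_0 - L_n$-type telescoping will not suffice; instead, at each step, conditioned on knowing $L_{i-1}$ (the log-size of the surviving pool), the ``new information'' $T_i$ must be paid for. Concretely: given $L_{i-1}$ and $K(T)$, one can enumerate $S_{i-1}(T)$ — wait, that needs $Q^{i-1}$ too; so more carefully, I would argue that from a shortest program $p$ witnessing $K(T)$ together with the values $L_0,\dots,L_{n}$, one reconstructs enough of the structure that each $T_i$ is determined by at most $L_{i-1}-L_i + O(1)$ extra bits relative to $p$ restricted appropriately, hence $\sum_i(L_{i-1}-L_i) = L_0 - L_n \leq N$ absorbs the telescoping, while simultaneously each $K(T_i\mid L_{i-1})$ is bounded by the per-step description. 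Combining, $\sum_i K(T_i\mid L_{i-1}) \leq^+ K(T) + (L_0 - L_n) + O(n) \leq K(T) + N + O(n)$, which rearranges into exactly $K(T) \geq \min\{N, \sum_i K(T_i\mid L_{i-1})\} - O(n)$ once one checks that when $\sum_i K(T_i\mid L_{i-1}) - O(n) \le N$ the second term of the min is the relevant one and the inequality $K(T) \ge \sum_i K(T_i\mid L_{i-1}) - O(n)$ follows, while in the opposite regime $K(T)\ge N$ holds because the telescoping sum $L_0-L_n$ already forces a program of length close to $N$ to be needed.

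More precisely for the sum bound, I would set up the following: fix the shortest $p$ with $|p| = K(T)$ witnessing $K(A\mid Q\ua, N)$. For each $i$, given $A^{i-1}, Q^{i-1}, N$ (which is what $T_i$ is conditioned on inside $K(T_i\mid \cdot)$ — here note $K(T_i\mid L_{i-1})$ as defined conditions only on $L_{i-1}$, so I need to be careful and probably the intended reading bundles the relevant prior data or the $O(n)$ slack absorbs the indices $i$ and lengths), the string $p$ itself lies in $S_{i-1}(T)$ and computes $A_i$. The pool $S_{i-1}$ has $\leq 2^{L_{i-1}+1}$ elements and is enumerable from $Q^{i-1}, N$; so $A_i$ is computed by some element of an enumerable set of size $2^{L_{i-1}}$, but that alone doesn't bound $K(A_i)$ — what bounds it is that $p$, the \emph{specific} element, has a short description: within $S_{i-1}$, $p$ can be named by roughly $\log|S_{i-1}| - \log|S_i| = L_{i-1}-L_i$ bits \emph{if} we also know we want an element computing $A_i$, but circularly we'd need $A_i$. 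The clean way is the reverse: I would show $K(T_i \mid L_{i-1}, \text{enough}) \leq^+ (L_{i-1} - L_i) + 2\log(\cdot) + O(1)$ because $A_i$ is the common output of all of $S_i$ (all surviving programs output $A_i$ at step $i$) and $S_i$ is specified inside $S_{i-1}$ by its size $L_i$ via standard enumeration — i.e. enumerate $S_{i-1}$, and $S_i$ is the set of those also outputting a fixed value; once $|S_i|$-many appear agreeing on their $i$-th output we know $A_i$. Summing the telescoping bound $\sum(L_{i-1}-L_i)=L_0-L_n\le N$ gives the result.

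\emph{Main obstacle.} The delicate point — and I expect it to be \emph{the} crux — is making the per-step description of $T_i$ genuinely short \emph{and} legitimately conditioned only on $L_{i-1}$ (as the lemma literally states), reconciling that with the fact that identifying $S_{i-1}$ seems to need $Q^{i-1}$ and the earlier $A$'s. I suspect the resolution is that we describe the whole vector $(T_1,\dots,T_n)$ jointly from $p$ plus the number list $(L_0,\dots,L_n)$ plus $n$ indices, feeding everything through one universal decoding; the $O(n)$ term then pays for the $n$ separate length markers and the conversion between ``$K$ of the tuple'' and ``$\sum K$ of the pieces'', exactly as in Lemma~\ref{lem:sequentialAdditivity}. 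Getting the bookkeeping of which $\log$-terms are $O(1)$, which are $O(\log)$ absorbed into $O(n)$, and which are the essential $L_{i-1}-L_i$ right — without a circular dependence on the $A_i$ we're trying to describe — is where the real care is needed.
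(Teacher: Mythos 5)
Your skeleton --- telescope the quantities $L_{i-1}-L_i$, bound each $K(T_i\mid L_{i-1})$ by $L_{i-1}-L_i+O(1)$, and split into cases according to whether the pool of surviving programs empties out --- is indeed the paper's skeleton, and the ``main obstacle'' you flag is actually a non-issue: the paper defines $K(T_i\mid Z)=K(A_i\mid A^{i-1},Q^{i-1},N,Z)$, so the history is already bundled into the conditioning and only $L_{i-1}$ is genuinely extra. But two steps of your argument have real gaps. First, your justification of the per-step bound does not work as stated: enumerating $S_{i-1}$ and waiting until $|S_i|$-many programs agree on their $i$-th output does not identify $A_i$, since other output values may also be shared by that many programs; moreover it conditions on $L_i$ (or $|S_i|$), which is not among the data you are allowed. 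The clean fix is the paper's: define the enumerable semimeasure $P(z)=2^{-L_{i-1}}\,|\{p\in S_{i-1}:\Phi(p\mid A^{i-1},Q^{i},N)\da=z\}|$ on candidate outputs $z$, note $P(A_i)=2^{-L_{i-1}}|S_i|$, and invoke the coding theorem to get $K(T_i\mid L_{i-1})\leq L_{i-1}-L_i+O(1)$; the degenerate case $P(A_i)=0$, i.e.\ $S_i=\emptyset$, is exactly what yields $K(T)>N$ and the first branch of the minimum.

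Second, your final combination is wrong. From $\sum_i K(T_i\mid L_{i-1})\leq L_0-L_n+O(n)$ and $L_0-L_n\leq N$ you conclude $\sum_i K(T_i\mid L_{i-1})\leq K(T)+N+O(n)$, which rearranges to $K(T)\geq\sum_i K(T_i\mid L_{i-1})-N-O(n)$ --- far weaker than the lemma. The missing link, which you gesture at (``forces a program of length close to $N$'') but never prove, is a lower bound on $|S_n|$ in terms of $K(T)$: a witnessing program of length $K(T)$ can be padded to $2^{N-K(T)-O(1)}$ distinct elements of $S_n$, so $L_n\geq N-K(T)-O(1)$, hence $K(T)\geq N-L_n-O(1)=L_0-L_n-O(1)\geq\sum_i K(T_i\mid L_{i-1})-O(n)$. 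Without this padding step the telescoped quantity $L_0-L_n$ is never connected to $K(T)$, and the stated inequality does not follow.
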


\begin{proof}
 For each $i$, a semimeasure $P$ can be constructed using $A^{i-1}$,$Q^i$,$L_{i-1}$,$N$:
 $$
  P(z) = 2^{-L_{i-1}} |\{p \in S_{i-1} : \Phi(p|A^{i-1},Q^i,N) \da= z\}|.
 $$
 Remark that $P$ defines a semimeasure and that $P$ is enumerable.
 $P(A_i)=0$, for some $i$, implies that no program of length $N$ can solve task $T^i$, thus $K(T) > N$. 
In this case the lemma is proven.  Assume $|S_i| \geq 1$ and thus $P(A_i)>0$.  
By applying the coding theorem \cite{LiVitanyi} on $P$, it follows that: 
 $$
  L_{i-1} - L_i \geq K(T_i|L_{i-1}) - O(1).
 $$
 Summing over $i$, gives:
 \begin{equation} \label{eq:L_vs_KTi}
 L_0 - L_n \geq \sum_i K(T_i|L_{i-1}) - O(n).
 \end{equation}
 Let $p$ be a program of length $K(T)$, solving task $T$.  It possible to append 
 $2^{N - K(T) - O(1)}$ different strings of length $N - K(T) - O(1)$
 to $p$, in order to obtain elements from $S_n$. Therefore:
 \begin{equation} \label{eq:Ln_vs_KT}
   L_n \leq^+ N - K(T). 
 \end{equation}
 Observe that $L_0 = N$. Combining equations \eqref{eq:L_vs_KTi} and \eqref{eq:Ln_vs_KT} proves the lemma.
\end{proof}

\begin{proof} \textit{of proposition \ref{prop:decompMain}}.
Let $u,x,y,v \in \omega^n$ for some $n$. Let 
$$
z = N,0,0,0,u_1, x_1, y_1, v_1, ... , u_n, x_n, y_n, v_n.
$$

Define:
\begin{eqnarray*}
 T_{ux,i} &=& (u_i, x_i | z^{4i}) \\
 T_{x,i} &=& (x_i | z^{4i + 1}) \\
 T_{yv,i} &=& (y_i, v_i | z^{4i + 2}).
\end{eqnarray*}
For $X = ux, yv$, let $D_{X,1}=0$ and for $i\geq 2$ let:
\begin{eqnarray*}
 D_{X,i} &=& L_{i-1}(T_X) - L_i(T_X).
\end{eqnarray*}
Remark that:
$$
 \sum_{j\leq i} D_{X,j} = N - L_i(X).
$$

\noindent
Equations \eqref{eq:DKdecomp}, \eqref{eq:puttingInY}, 
\eqref{eq:puttingOutY}, and \eqref{eq:KTyLarge} are now derived.

\begin{itemize}

\item
Let: 
\begin{eqnarray*}
 u_i &=& D_{yv,i-1}^*[z^{4i}] \\
 v_i &=& D_{ux,i-1}.
\end{eqnarray*}

\noindent 
At the end of the proof $u,x,y,v,N$ will be constructed such that equation 
\eqref{eq:xyLengthB} holds, and therefore, 
$N \geq K(T_X) - O(n)$ for $X=ux,yv$. Since 
\begin{eqnarray*}
 \zetZero & \computes & u_i  \computes  L_{i-1}(T_{ux}) \\ 
 \zetTwo  & \computes & v_i  \computes  L_{i-1}(T_{yv})
\end{eqnarray*}
we have by lemma \ref{lem:decisionComplexityLB}: 
\begin{equation} \label{eq:DKdecomp}
  K(T_X) \geq \sum_i K(T_{X,i}) - O(n).
\end{equation}

\item
\noindent Choose:
\begin{equation} \label{eq:yDef}
 y_i = K(T_{x,i})^*[z^{4i+2}]. 
\end{equation}
By Lemma \ref{lem:complexityOfComplexity}, it follows that:
$$
K(u_i, x_i|z^{4i}) =^+ K(u_i, x_i, K(x_i | u_i^*[z^{4i}], z^{4i}) | z^{4i}).
$$
By equation \eqref{eq:shortestOfShortest}, we have that $u_i^*[\zetZero], \zetZero \compeq u_i, \zetZero$, and therefore: 
\begin{eqnarray*}
K(x_i | u_i^*[z^{4i}], z^{4i}) &\makebox[0.34cm][l]{$=^+$}& K(x_i|u_i, z^{4i}) \\
		&\makebox[0.34cm][l]{$=$}& K(x_i | z^{4i+1}) \\
		&\makebox[0.34cm][l]{$=$}& K(T_{x,i}) 
\end{eqnarray*}
Therefore:
$$
K(u_i, x_i, K(x_i | u_i^*[z^{4i}], z^{4i}) | z^{4i}) =^+ K(u_i, x_i, K(T_{x,i}) | z^{4i}). 
$$
Remark that $l(x_i) = m$, and therefore $K(T_{x,i}) \leq^+ 2\log m$.
 By Lemma \eqref{lem:KshortestUB}, we have:
$$
 K(u_i, x_i, K(T_{x,i}) | z^{4i}) \geq K(u_i, x_i, K(T_{x,i})^*[\zetTwo] | z^{4i}) - O(\log \log m).
$$
 By definition of $y_i$, \eqref{eq:yDef}, this shows that:
\begin{equation} \label{eq:puttingOutY}
 K(u_i, x_i|z^{4i}) \geq K(u_i, x_i, y_i | z^{4i}) - O(\log \log m).
\end{equation}

\item
From equations \eqref{eq:shortestOfShortest} and \eqref{eq:yDef}, we have: 
$$
 y_i,z^{4i+2} \compeq y_i^*[z^{4i+2}],z^{4i+2}.
$$
Therefore, 
\begin{eqnarray}
 K(v_i | z^{4i+3}) &\makebox[0.34cm][l]{$=$}& K(v_i | y_i, z^{4i+2}) \nonumber \\
		&\makebox[0.34cm][l]{$=^+$}& K(v_i | y_i^*[z^{4i+2}], z^{4i+2}) \nonumber \\ 
		&\makebox[0.34cm][l]{$=^+$}& K(y_i, v_i | z^{4i+2}) - K(y_i |z^{4i+2}) \label{eq:puttingInY}
\end{eqnarray}

\item
\noindent In \cite{ComplexityOfComplexity, GacsNotes, LiVitanyi} it is shown that 
for all $m, w$ there is an $x \in 2^m$ such that 
$$
 K(K(x|w)| x,w) \geq \log m - \log \log m - O(1).
$$
Actually, the unconditioned version is shown, but this version has the same proof.
Fix an $m$ large enough and choose $x_i \in 2^m$ such that by equation \eqref{eq:yDef}:
\begin{eqnarray}
 K(y_i | \zetTwo)  &\makebox[0.34cm][l]{$=$}& K(K(T_{x,i})^*[\zetTwo]|\zetTwo) \nonumber \\  
                  &\makebox[0.34cm][l]{$\geq^+$}& K(K(T_{x,i})|\zetTwo) \nonumber \\  
                  &\makebox[0.34cm][l]{$=$}& K(K(x_i|\zetOne)|x_i, \zetOne) \nonumber \\  
                  &\makebox[0.34cm][l]{$\geq^+$}& \log m - \log \log m.   \label{eq:KTyLarge}
\end{eqnarray}

\end{itemize}

\noindent
First using Lemma \ref{lem:sequentialAdditivity}, then applying subsequently 
equations \eqref{eq:puttingOutY}, \eqref{eq:puttingInY}, \eqref{eq:KTyLarge}, and 
\eqref{eq:DKdecomp} gives:

\begin{eqnarray*}
	&& K(u  ,x,y,v) \\
	&&\leq \sum_i K(u_i, x_i, y_i | \zetZero) + \sum_i K(v_i | \zetThree) + O(n) \\
	&&\leq \sum_i K(u_i, x_i | \zetZero) + \sum_i K(y_i, v_i | \zetTwo) - \sum_i K(y_i| \zetTwo) + O(n \log \log m)\\
	&&\leq K(T_{ux}) + K(T_{yv}) - O(n\log m).
\end{eqnarray*} 

\noindent
Let $\la.,.\ra$ be a computable bijective pairing function such that 
for all $a,b \in \omega$, $l(\overline{\la a,b \ra}) \leq l(\overline{a}) + l(\overline{b})$.  Let:
\begin{eqnarray*}
 x'_i &=& \la u_i, x_i \ra \\
 y'_i &=& \la y_i, v_i \ra.
\end{eqnarray*}

\noindent
To finish the proof it suffices to show that 
\begin{equation} \label{eq:xyLengthA}
l(\overline{x'}) + l(\overline{y'}) \leq N \leq O(nm).
\end{equation}
Remark that because $x_i \in 2^m$, $l(\overline{x_i}) \leq 2m$ 
and because $y_i = K(T_{x,i})$, $l(\overline{y_i}) \leq^+ 2\log m$:
\begin{eqnarray*}
 l(\overline{x'_i}) + l(\overline{y'_i}) &\leq& l(\overline{u_i}) 
			+ l(\overline{x_i}) + l(\overline{y_i}) + l(\overline{v_i})  \\
		&\leq& l(\overline{D_{ux,i}}) + 2m + 2\log m + l(\overline{D_{yv,i}}). \nonumber
\end{eqnarray*}
Choose $N=3mn$. For $X = ux,yv$, $\sum_i D_{X,i} \leq N+1$, 
and therefore $\sum_i l(\overline{D_{X,i}}) \leq 3n \log m$.
This shows that for $m$ large enough: 
\begin{equation} \label{eq:xyLengthB}
 l(\overline{x'}) + l(\overline{y'}) \leq 3mn = N.
\end{equation}
This shows equation \eqref{eq:xyLengthA}.
\end{proof}

\begin{corollary}
 For some $c>0$, for all but finitely many $n$, there exist a $z \in 2^{2n}$ such that:
 \begin{equation} \label{eq:decisionDiffProp}
  K(0 \rar z_1;  ... ; z_{2n-2} \rar z_{2n-1})
   + K(z_1 \rar z_2; ... ; z_{2n-1} \rar z_{2n}) - K(z) \geq cn. 
 \end{equation}
\end{corollary}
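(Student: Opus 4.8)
The plan is to transport Proposition~\ref{prop:decompMain} to binary strings. First I would rewrite the left side of \eqref{eq:decisionDiffProp} in the notation of that proposition. Put $a=z_1z_3\cdots z_{2n-1}$ and $b=z_2z_4\cdots z_{2n}$, viewed as elements of $2^n$. Unwinding the definition of on-line decision complexity, the question preceding the prediction of $z_{2i-1}$ is $(0,z_2,z_4,\dots,z_{2i-2})$, i.e.\ $b$ cut off at its first $i-1$ symbols, and the question preceding $z_{2i}$ is $(z_1,z_3,\dots,z_{2i-1})$, i.e.\ $a$ cut off at its first $i$ symbols; hence the two summands in \eqref{eq:decisionDiffProp} are precisely $K(a|b\ua)$ and $K(b|a\uap)$. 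Since $z\compeq(a,b)$, we have $K(z)=^+K(a,b)$, so the corollary is equivalent to producing, for all large $n$, binary strings $a,b\in 2^n$ with $K(a|b\ua)+K(b|a\uap)-K(a,b)\geq cn$.

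For this I would use the construction inside the proof of Proposition~\ref{prop:decompMain}: for a fixed large $m$ it produces, for every large $n'$, a pair $x',y'\in\finnat$ of common length $n'$ with $l(\overline{x'})+l(\overline{y'})=\Theta(mn')$ and $K(x'|y'\ua)+K(y'|x'\uap)-K(x',y')>c(l(\overline{x'})+l(\overline{y'}))$. Writing $\overline{w}$ for the prefix-free code of the natural number $w$ — so that $l(\overline{x'})=\sum_i l(\overline{x'_i})$, and similarly for $y'$ — set $\pi_i=l(\overline{x'_i})$ and $q_i=l(\overline{y'_i})$, and build $z$ from $n'$ consecutive super-blocks, the $i$-th of length $2(\pi_i+q_i)$, whose odd positions carry $\alpha_i=\overline{x'_i}\,0^{q_i}$ and whose even positions carry $\sigma_i=0^{\pi_i}\,\overline{y'_i}$. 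Each super-block has even length, so $a=z_1z_3\cdots=\alpha_1\cdots\alpha_{n'}$, $b=z_2z_4\cdots=\sigma_1\cdots\sigma_{n'}$, and $|z|/2=\sum_i(\pi_i+q_i)=l(\overline{x'})+l(\overline{y'})$. The layout is chosen so that while the bits of $\overline{x'_i}$ sit on odd positions the matching even positions carry only the leading $0$'s of $\sigma_i$, and while the bits of $\overline{y'_i}$ sit on even positions all of $\overline{x'_i}$ has already appeared on the odd positions.

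I would then verify $K(a|b\ua)\geq^+K(x'|y'\ua)$, $K(b|a\uap)\geq^+K(y'|x'\uap)$, and $K(a,b)\leq^+K(x',y')$; the last holds because $z$ is computable from $(x',y')$. For the first, from an optimal program $p$ for $K(a|b\ua)$ one builds a program that, on input $y'_1,\dots,y'_{i-1}$, recovers $x'_i$ by running $p$ bit by bit to regenerate $a$: at each step the prefix of $b$ that must be fed to $p$ is a concatenation of completed blocks $\sigma_l=0^{\pi_l}\overline{y'_l}$ with $l<i$ — the $\pi_l$ having been read off the self-delimiting codes $\overline{x'_l}$ already produced and the $\overline{y'_l}$ being available from the input — followed by $0$'s from the front of $\sigma_i$; once $\overline{x'_i}$ has been parsed the simulation halts and outputs $x'_i$. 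The second estimate is symmetric, with the roles and paddings exchanged. Combining, $K(a|b\ua)+K(b|a\uap)-K(a,b)>c(l(\overline{x'})+l(\overline{y'}))-O(1)=c\,(|z|/2)-O(1)$, a gap linear in $|z|$. Since the realized lengths $|z|=2(l(\overline{x'})+l(\overline{y'}))=\Theta(mn')$ grow to infinity with bounded ratio between consecutive values, a routine padding argument — append $O(n)$ bits, which cannot decrease either on-line decision complexity (any program for the longer prediction task also solves the shorter one) and raises $K$ by at most $O(\log n)$ — yields a valid $z\in 2^{2n}$ for every sufficiently large $n$, at the cost of a smaller positive constant.

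The step I expect to be the main obstacle is the verification of the two on-line estimates: the interleaving has to be arranged so precisely that in each reduction the party holding only $x'$, respectively only $y'$, can nevertheless recompute every prefix of $z$ on which the borrowed program must be run. This is what forces both the placement of the paddings inside a super-block and the use of genuinely variable, self-delimiting block lengths instead of one common padded length; the latter is what keeps $|z|$ equal to $2(l(\overline{x'})+l(\overline{y'}))$, so that the linear gap furnished by Proposition~\ref{prop:decompMain} transfers without dilution.
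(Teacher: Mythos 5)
Your proposal is correct and follows essentially the same route as the paper: it invokes the pair $x',y'$ from the proof of Proposition~\ref{prop:decompMain} and interleaves the prefix-free codes $\overline{x'_i}$, $\overline{y'_i}$ with zero-padding onto the odd and even positions of $z$ in exactly the paper's arrangement, then pads to reach every length $2n$. You in fact supply more detail than the paper does on the two reduction inequalities $K(a|b\!\ua)\geq^+K(x'|y'\!\ua)$ and $K(b|a\!\uap)\geq^+K(y'|x'\!\uap)$, which the paper leaves implicit.
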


\begin{proof}
 Let $x',y'$ be as constructed in the proof. 
 Let $\ovl{x'_i}$ and $\ovl{y'_i}$ be binary prefix-free encodings
 corresponding to the definition of $l(\overline{x})$.
of $x'_i$ and $y'_i$, $i\leq n$.   Define $z$:
 \begin{eqnarray*}
 z =& \ovl{x'}_{1,1}, 0, ... ,\ovl{x'}_{1,l(\ovl{x'_1})}, 0, \\
   &  0, \ovl{y'}_{1,1}, ... ,0, \ovl{y'}_{1,l(\ovl{y'_1})}, \\
   & ... \\
   & \ovl{x'}_{n,1}, 0, ... ,\ovl{x'}_{n,l(\ovl{x'_n})}, 0, \\
   &  0, \ovl{y'}_{n,1}, ... ,0, \ovl{y'}_{n,l(\ovl{y'_n})}.
 \end{eqnarray*}
 Since $\sum_{i\leq n} l(\overline{x'_i}) + l(\overline{y'_i}) \leq 3mn$, 
we have that $z \in 2^{\leq 6n}$.
This shows that for all but finitely many $n$ a string of length 
maximally $6mn$ exists that satisfies the inequality of the lemma. 
By appending zeros to the end of $x'$ and $y'$, equality 
\eqref{eq:decisionDiffProp} can be satisfied for every $n$.
\end{proof}

\begin{acknowledgement}
The author is grateful for the comments of A. Shen on early proof attempts and motivation 
to write out a full exact proof.
\end{acknowledgement}

\bibliographystyle{plain}
\bibliography{../bib/practCausalities,../bib/statisticalCausalities,../bib/eigen,../bib/bib}

\end{document}